\documentclass[conference]{IEEEtran}

\usepackage{mathtools}

\usepackage{floatflt}

\usepackage{color}
\usepackage{cite}

\usepackage{graphicx}
\usepackage{amsmath}
\usepackage{amsthm}
\usepackage{amssymb}

\usepackage{verbatim}

\usepackage{psfrag,array}

\usepackage{subfigure}

\usepackage{stfloats}

\usepackage{todonotes}

\usepackage{amsmath}
\usepackage{epstopdf}
\usepackage{algorithmic}

\newcommand{\p}[1]{\mathop{\mbox{\it p} } }

\renewcommand{\vec}[1]{\ensuremath{\boldsymbol{#1}}}
\newcommand{\be}{\begin{equation}}
\newcommand{\ee}{\end{equation}}
\newcommand{\ba}{\begin{array}}
\newcommand{\ea}{\end{array}}
\newcommand{\bea}{\begin{eqnarray}}
\newcommand{\eea}{\end{eqnarray}}
\newcommand{\bean}{\begin{eqnarray*}}
\newcommand{\eean}{\end{eqnarray*}}

\renewcommand{\Re}{\mathcal{R}}

\definecolor{white}{rgb}{1,1,1}

\newtheorem{theorem}{Theorem}

\newtheorem{proposition}{Proposition}
\newtheorem{property}{Property}

\newtheorem{corollary}{Corollary}

\begin{document}

\title{Cram\'{e}r-Rao Lower Bounds for Positioning with Large Intelligent Surfaces}
\author
{
Sha Hu, Fredrik Rusek, and Ove Edfors \\
Department of Electrical and Information Technology, \\
Lund University, Lund, Sweden \\
\{firstname.lastname\}@eit.lth.se
\vspace{-0mm}
}
\maketitle

\begin{abstract}
We consider the potential for positioning with a system where antenna arrays are deployed as a large intelligent surface (LIS). We derive Fisher-informations and Cram\'{e}r-Rao lower bounds (CRLB) in closed-form for terminals along the central perpendicular line (CPL) of the LIS for all three Cartesian dimensions. For terminals at positions other than the CPL, closed-form expressions for the Fisher-informations and CRLBs seem out of reach, and we alternatively provide approximations (in closed-form) which are shown to be very accurate. We also show that under mild conditions, the CRLBs in general decrease quadratically in the surface-area for both the $x$ and $y$ dimensions. For the $z$-dimension (distance from the LIS), the CRLB decreases linearly in the surface-area when terminals are along the CPL. However, when terminals move away from the CPL, the CRLB is dramatically increased and then also decreases quadratically in the surface-area. We also extensively discuss the impact of different deployments (centralized and distributed) of the LIS.
\end{abstract}

\section{Introduction}
Wireless communication has evolved from few and geographically distant base stations to more recent concepts involving a high density of access points, possibly with many antenna elements on each. A Large Intelligent Surface (LIS) is a newly proposed concept in wireless communication that is envisioned in \cite{HRE17, ewall}, where future man-made structures are electronically active with integrated electronics and wireless communication making the entire environment \lq\lq{}intelligent\rq\rq{} as depicted in Fig. 1. LIS can be seen as an extension of earlier research in several other fields. One strong relation is to the massive MIMO concept \cite{MM12, M10}, where large arrays of hundreds of antennas are used to achieve massive gains in spectral and energy efficiencies.

As LIS scales up beyond the traditional antenna array concept, it implies a clean break with the traditional access-point/base-station concept, as the entire environment is active in the communication. The natural limit of this evolution is that all LISs in an environment act as transmitting and receiving structures. LIS allows for an unprecedented focusing of energy in the three-dimensional space which enables, besides unprecedented data rates, wireless charging and remote sensing with extreme precision. This makes it possible to fulfill the most grand visions  in 5G communication and Internet of Things \cite{IOT} for providing connections to billions of devices. LIS seems to be first envisioned in the eWallpaper project at UC Berkeley \cite{ewall}. In \cite{HRE17}, we carry out a first analysis on information transfer capabilities of LIS, and show that, the number of signal space dimensions per $m^2$ deployed surface-area is $\pi/\lambda^2$, where $\lambda$ is the wavelength, and the capacity that can be harvested per $m^2$ surface-area is linear in the average transmit power, rather than logarithmic. 

\begin{figure}[t]
\begin{center}
\vspace*{-1mm}
\hspace*{-4mm}
\scalebox{0.6}{\includegraphics{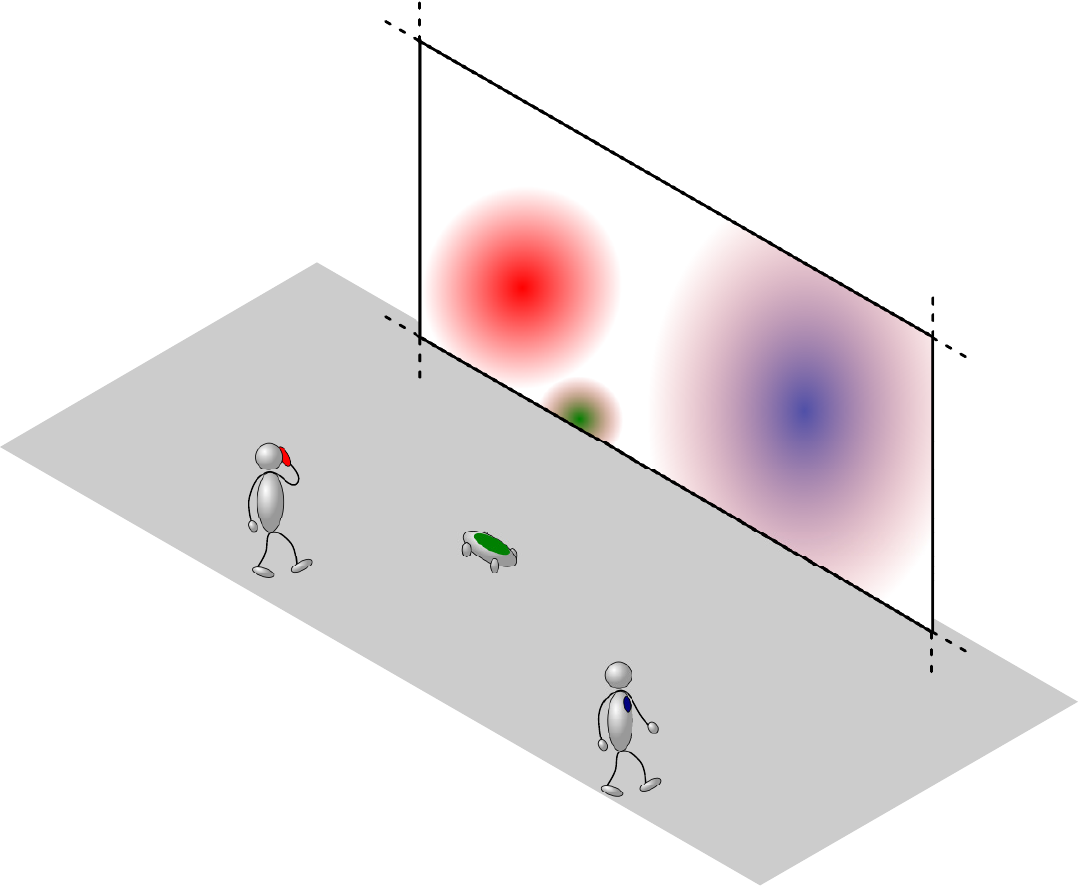}}
\vspace*{-3mm}
\caption{\label{fig1}Three users communicating with an LIS.}
\vspace*{-7mm}
\end{center}
\end{figure}

Following \cite{HRE17}, in this paper we take a first look at the potential of using LIS  for terminal positioning, where terminals are equipped with a single-antenna and located in the three-dimensional space in front of the LIS. For analytical tractability, we assume an ideal situation where no scatterers or reflections are present, yielding a perfect line-of-sight (LoS) propagation scenario and each terminal is assumed to radiate isotropically. Although we do not deal with more complicated geometries, our results are fundamental in the sense that positioning of objects in scattering environment \cite{ZB09, NV04} is commonly done in two steps: i) estimating the positions of a number of reflecting objects in the environment, and ii) backward computation of the position of the object of interest. Therefore, our results are instrumental for the understanding of the accuracy in the first step.

We derive the Cram\'{e}r-Rao lower bounds (CRLB) for terminals along the central perpendicular line (CPL) in closed-form. For a terminal that is not on the CPL, in order to analyze the properties of the CRLB, we use effective approximations of the Fisher-information and CRLB based on the results obtained with CPL. We approximate the Fisher-informations and CRLBs in such cases with closed-form expressions, which are shown to be very accurate under mild conditions. We also show that, the CRLB in general decreases quadratically in surface-area, except for terminals along the CPL where the CRLB for the $z$-dimension decreases linearly in the same. Meanwhile, the impact of wavelength is $\sim\!\lambda^2$. These scaling laws play in favor of LISs when compared to other positioning technologies e.g., optical systems. A LIS can compensate for its, comparatively, large wavelength by much larger aperture.

\section{Signal Model with LIS}
Expressed in Cartesian coordinates, we assume that the center of the LIS is located at coordinates $x\!=\!y\!=\!z\!=\!0$ and the terminal is located at positive $z$-coordinate. We assume an ideal situation with a perfect LoS propagation scenario where no scatterers or reflections are present, and each terminal radiates isotropically. Denoting the wavelength as $\lambda$, and assuming a narrow-band system and ideal free-space propagation from the terminal to that point, the received signal at the LIS at location $(x,\,y,\,0)$ radiated by a terminal at location $(x_0,\,y_0,\,z_0)$ is
\bea \label{md} \hat{s}_{x_0,\,y_0,\,z_0}(x,y)=s_{x_0,\,y_0,\,z_0}(x,y)+n(x,y), \eea
where $n(x,y)$ is modeled as independent complex Gaussian variable with a zero-mean and spectral density $N_0$, and the noiseless signal $s_{x_0,\,y_0,\,z_0}(x,y)$ is stated in Property 1.
\begin{property} 
The noiseless signal $s_{x_0,\,y_0,\,z_0}(x,y)$ can be described as
\bea \label{md1} s_{x_0,\,y_0,\,z_0}(x,y)=\frac{\sqrt{z_0}}{2\sqrt{\pi}\eta^{3/4}}\exp\!\left(\!-\frac{2\pi j\sqrt{\eta}}{\lambda}\right)\!,   \eea
where
\bea  \label{eta} \eta\!=\!z_0^2\!+\!(y\!-\!y_0)^2\!+\!(x\!-\!x_0)^2.\eea 
\end{property} 
\begin{proof}
The radiating model of transmitting signal at location $(x_0,y_0,z_0)$ to the LIS is depicted in Fig. \ref{figsm}. The noiseless signal received by the LIS at location $(x,\,y,\,0)$ and time epoch $t$ reads,
\bea \label{nbs1} s_{x_0,\,y_0,\,z_0}(x,y)=\sqrt{P_L\cos\phi(x,y)}s(t) \exp\!\left(\!-2\pi j f_c t\right), \eea
where $P_L$ denotes the path-loss, $\phi(x,y)$ is angle-of-arrivals (AoA) of the transmitted signal $s(t)$ at $(x,y,0)$ , and $f_c$ is the carrier-frequency.  The transmit-time from the terminal to the $(x,y,0)$ equals $t_0=\frac{\sqrt{\eta}}{c}$, where $c$ is the speed of light. Since we are considering a narrow-band system, the transmit signal $s(t)$ can be assumed to be constant over time-interval $[0, t_0]$, therefore, we assume $s(t)\!=\!1$ and remove it from (\ref{nbs1}). Further, as the free-space path-loss is $P_L\!=\!\frac{1}{4\pi\eta}$ and $\cos\phi(x,y)\!=\!\frac{z_0}{\sqrt{\eta}}$, inserting them back into (\ref{nbs1}) yields (\ref{md1}).
\end{proof}

\begin{figure}[t]
\begin{center}
\vspace*{-2mm}
\hspace*{-2mm}
\scalebox{0.82}{\includegraphics{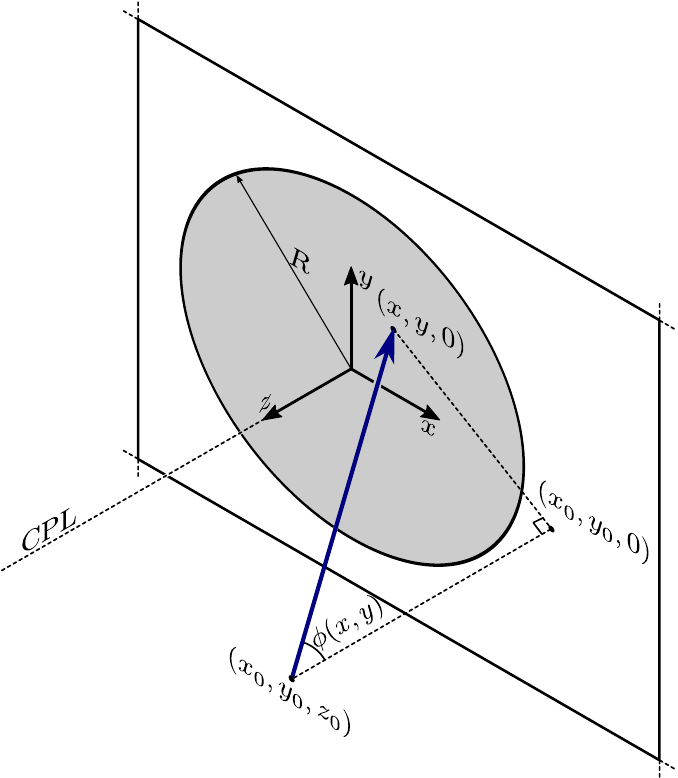}}
\vspace*{-2mm}
\caption{\label{figsm}The radiating model of transmitting signal to the LIS. We integrate the received signal of each point-element over the whole area spanned by the LIS. Therefore, for each point-element on the LIS, the Fraunhofer distance \cite{F10} is infinitely small and the received signal model (\ref{md1}) holds for both near-filed and far-field scenarios.}
\vspace*{-6mm}
\end{center}
\end{figure}

Denote the first-order derivatives with respect to variables $x_0$, $y_0$ and $z_0$ as $\Delta s_1$, $\Delta s_2$, and $\Delta s_3$, respectively. We have
{\setlength\arraycolsep{0pt}  \bea \label{dev1} \Delta s_1&=&\frac{\sqrt{z_0}\left(x -x_0\right)}{2\sqrt{\pi}}\!\left(\!
\frac{3}{2}\eta^{-\frac{7}{4}} \!+\! \frac{2\pi j }{\lambda}{\eta}^{-\frac{5}{4}}\!\right)\!\exp\left(\!-\frac{2\pi j\sqrt{\eta}}{\lambda} \right)\!, \notag  \\
\label{dev2} \Delta s_2&=&\frac{\sqrt{z_0}\left(y -y_0\right)}{2\sqrt{\pi}}\left(
\!\frac{3}{2}\eta^{-\frac{7}{4}} \!+\! \frac{2\pi j }{\lambda}{\eta}^{-\frac{5}{4}}\!\right)\!\exp\left(\!-\frac{2\pi j\sqrt{\eta}}{\lambda}\right)\!,  \notag  \\
\label{dev3} \Delta s_3&=&\frac{z_0^{\frac{3}{2}}}{2\sqrt{\pi}}\!\left(
\frac{1}{2z_0^2}\eta^{-\frac{3}{4}} \!- \!\frac{3}{2}\eta^{-\frac{7}{4}} \!- \!\frac{2\pi j }{\lambda}{\eta}^{-\frac{5}{4}}\!\right)\!\exp\left(\!-\frac{2\pi j\sqrt{\eta}}{\lambda} \right)\!.  \notag \\ \eea}
\hspace{-1.4mm}Then, the elements of the Fisher-information matrix are given by the following double integral
\bea \label{Fisherij} I_{ik}=\frac{2}{N_0}\iint_{x, y}\Re\!\left\{\Delta s_k \left(\Delta s_i\right)^{\ast}\right\}\mathrm{d}x\mathrm{d}y, \eea
where \lq{}$\Re\{\}$\rq{} takes the real part and the integral is taken over the area of the LIS, which we assume to have a disk-shape of radius $R$. Note that as CRLB scales down linearly in signal to noise ratio (SNR), we set $N_0=2$ throughout the paper to eliminate the scaling factor in (\ref{Fisherij}).

\section{CRLBs of Terminals along the CPL}
In this section we analyze the CRLBs for terminals along the CPL, i.e., with coordinates $(0, 0, z_0)$. A nice property of the CPL is that, the CRLBs for all dimensions are in closed-form, and can be used to approximate the CRLBs for terminals at non-CPL positions. We denote the Fisher-information and CRLB for terminals with coordinates $(x_0, y_0,z_0)$ and a LIS with radius $R$ as $I_i([x_0, y_0,z_0], R)$ and $C_i([x_0, y_0,z_0], R)$, where the suffix $i=x, y, z$ represents the $x$, $y$, $z$ dimension, respectively. When $i$ contains multiple variables, it means that all these dimensions contained in $i$ have the same value. For instance, $I_{x, y}([x_0, y_0,z_0], R)$ denotes the Fisher-information for both $x$ and $y$ dimensions whenever they are equal.

\subsection{CRLBs for Three Dimensions}
\begin{theorem}
The Fisher-information matrix \vec{I} for terminals with coordinates $(0,0,z_0)$ is diagonal, and the elements are
{\setlength\arraycolsep{2pt} \bea  \label{Ixy} I_{11}&=&I_{22}=I_{x, y}([0,0,z_0], R)=-\frac{B_1}{A},  \\
 \label{Iz}  I_{33}&=&I_z([0,\,0,\, z_0], R)=-\frac{2B_2}{A}, \eea}
\hspace{-1.4mm}where $A$, $B_1$, and $B_2$ are defined in (\ref{A1})-(\ref{B2}).
\end{theorem}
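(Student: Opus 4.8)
The plan is to substitute the central-perpendicular-line coordinates $x_0=y_0=0$ into the first-order derivatives $\Delta s_1,\Delta s_2,\Delta s_3$ and then exploit the symmetry of the integrand over the disk. On the CPL one has $\eta = z_0^2 + x^2 + y^2$, and the derivatives factor as $\Delta s_1 = x\,g(\eta)$ and $\Delta s_2 = y\,g(\eta)$ with the \emph{same} complex radial function
\[
g(\eta) = \frac{\sqrt{z_0}}{2\sqrt{\pi}}\!\left(\frac{3}{2}\eta^{-7/4} + \frac{2\pi j}{\lambda}\eta^{-5/4}\right)\!\exp\!\left(-\frac{2\pi j\sqrt{\eta}}{\lambda}\right),
\]
whereas $\Delta s_3$ depends on $\eta$ alone (carrying the same unit-modulus phase). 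This separation is the structural fact that drives the whole argument.

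First I would show that $\vec{I}$ is diagonal, which follows from parity alone. The product $\Delta s_2(\Delta s_1)^{\ast} = xy\,|g(\eta)|^2$ is real and odd in each of $x$ and $y$, while $|g(\eta)|^2$ is radially symmetric, so integrating over the disk (symmetric about both axes) yields $I_{12}=0$. Likewise $\Delta s_3(\Delta s_1)^{\ast} = x\,\Delta s_3\,g(\eta)^{\ast}$, so that $\Re\{\Delta s_3(\Delta s_1)^{\ast}\}$ is $x$ times a radially symmetric real function and hence odd in $x$; its integral vanishes, and symmetrically $I_{23}=0$. This establishes the diagonal structure claimed in the theorem.

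Next I would evaluate the diagonal entries by passing to polar coordinates $x=r\cos\theta$, $y=r\sin\theta$, under which $\eta = z_0^2+r^2$ and, crucially, the unit-modulus phase disappears from $|\Delta s_i|^2$. For $I_{11}=\iint x^2|g(\eta)|^2\,\mathrm{d}x\mathrm{d}y$ the angular integral $\int_0^{2\pi}\cos^2\theta\,\mathrm{d}\theta=\pi$ leaves a radial integral of the form $\int_0^R r^3(z_0^2+r^2)^{-p}\,\mathrm{d}r$ with $p=\tfrac{7}{2},\tfrac{5}{2}$; for $I_{33}=\iint|\Delta s_3|^2\,\mathrm{d}x\mathrm{d}y$ the angular integral is $2\pi$ and the radial part is $\int_0^R r(z_0^2+r^2)^{-p}\,\mathrm{d}r$ with $p=\tfrac{3}{2},\tfrac{5}{2},\tfrac{7}{2}$. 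Every such integral is elementary: the substitution $u=z_0^2+r^2$ turns it into an integral of powers of $u$ over $[z_0^2,\,z_0^2+R^2]$. Collecting the pieces into the quantities $A$, $B_1$, $B_2$ of (\ref{A1})--(\ref{B2}) then produces (\ref{Ixy}) and (\ref{Iz}).

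The conceptual content is light, so the main obstacle is bookkeeping. I must expand $|g(\eta)|^2 \propto \tfrac{9}{4}\eta^{-7/2}+\tfrac{4\pi^2}{\lambda^2}\eta^{-5/2}$ (whose real and imaginary parts decouple) and, for $\Delta s_3$, the analogous magnitude which additionally contains a genuine cross term between its two real contributions, producing $\eta^{-3/2}$ and $\eta^{-5/2}$ terms; then carry these powers correctly through the $u$-substitution and simplify the resulting algebraic-plus-rational expressions so that they match the compact ratios $-B_1/A$ and $-2B_2/A$. I would in particular reconcile the factor of two distinguishing (\ref{Iz}) from (\ref{Ixy}), which should emerge naturally once both radial integrals are expressed through the common normalizer $A$.
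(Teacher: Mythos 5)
Your proposal is correct and follows essentially the same route as the paper's Appendix~A: diagonality from the parity of the integrand over the symmetric disk, followed by reduction of the diagonal entries to elementary radial integrals of $\eta^{-n/2}$ (your polar-coordinate integrals are exactly the paper's $g_1(n)$ and $g_2(n)$, with the same powers $n=7,5$ for $I_{11}$ and $n=3,5,7$ including the cross term for $I_{33}$). The only cosmetic difference is that the paper names and tabulates these integrals in closed form before substituting, whereas you carry out the $u=z_0^2+r^2$ substitution inline.
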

\begin{proof}
Theorem 1 is proved by directly solving the integrations in (\ref{Fisherij}), see Appendix A.
\end{proof}
\begin{figure*}[b]
\vspace{-4mm}
\hrulefill
{\setlength\arraycolsep{2pt} \bea \label{A1} A&=&240\lambda^2z_0^2{\left(R^2 + z_0^2\right)}^{\frac{5}{2}},   \\
B_1&=&160{\pi}^2 z_0^7 + 400 {\pi}^2R^2 z_0^5 + 240{\pi}^2 z_0^3 R^4  - 18\lambda^2 {\left(R^2 + z_0^2\right)}^{\frac{5}{2}}- 160 {\pi}^2 z_0^2  {\left(R^2 + z_0^2\right)}^{\frac{5}{2}} + 18\lambda^2z_0^5 + 45\lambda^2 R^2 z_0^3,  \\
 \label{B2} B_2&=&12\lambda^2z_0^5 - 12 \lambda^2 {\left(R^2 + z_0^2\right)}^{\frac{5}{2}} + 80{\pi}^2z_0^7  - 80{\pi}^2z_0^2 {\left(R^2 + z_0^2\right)}^{\frac{5}{2}}+ 15z_0\lambda^2  R^4 + 80{\pi}^2R^2z_0^5.\eea}
\vspace{-10mm}
\end{figure*}

According to Theorem 1, the CRLB can be computed as
\bea C_i([0,\,0,\, z_0], R)=I_{i}^{-1}([0,\,0,\, z_0], R), \; i=x, y, z, \eea
and the following conclusions can be derived. 

Firstly, when $z_0$ is close to 0, that is, the terminal is close\footnote{But $z_0=0$ is a singularity point and the CRLBs are $\infty$ as no signal is received by the LIS.} to the LIS, the CRLBs $C_i([0,\,0,\, z_0], R)$ are 0, while under the case $R\ll z_0$, the CRLBs are $\infty$. These observations are consistent with the nature of the problem at hand.

Secondly, in order to get a direct view of the CRLBs in relation to the surface-area, we assume $\lambda\ll z_0$ (which in general holds as $\lambda$ is the wavelength). Defining
\bea\tau=R/z_0,\eea
then the CRLBs can be simplified as
{\setlength\arraycolsep{2pt} \bea \label{Cxy}C_{x,y}([0,\,0,\, z_0], R)
&\approx&\frac{3\lambda^2}{2\pi^2}f_1(\tau), \\
 \label{Cz} C_z([0,\,0,\, z_0], R)
&\approx&\frac{3\lambda^2}{2\pi^2}f_2(\tau), \eea}
\hspace{-1.4mm}where
\bea f_1(\tau)&=&\frac{\left(1+ \tau^2\right)^{\frac{5}{2}}}{\left(1+ \tau^2\right)^{\frac{5}{2}}-1\, - 2.5\, \tau^{2}- 1.5\, \tau^{4}},  \\
\label{f2} f_2(\tau)&=&\frac{\left(1+ \tau^2\right)^{\frac{5}{2}}}{\left(1+ \tau^2\right)^{\frac{5}{2}} -1 - \tau^{2}}, \eea
respectively. As can been seen, the CRLBs for all dimensions are uniquely decided by $\tau$. Hence, when $z_0$ is increased by a factor of $\alpha$, the radius $R$ of the LIS also has to increase by the same factor in order to have the same CRLBs. Another interesting but somewhat intuitive fact is that the CRLBs for $x$ and $y$ dimensions are higher than that for $z$-dimension.

Thirdly, under the case $R\!\gg \!z_0$, i.e., $\tau\!\to\!\infty$, the asymptotic CRLBs in (\ref{Cxy}) and (\ref{Cz}) are identical and equal
\bea \label{limtCRLB} \lim_{R\to \infty} C_{x, y, z}([0,\,0,\, z_0], R)=\frac{3\lambda^2}{2\pi^2},\eea
which depends only on the wavelength $\lambda$ and represents a fundamental lower limit to positioning precision.

Lastly, in reality, the most likely case is $R\ll z_0$, i.e., $\tau\to 0$. Then we have the approximations 
{\setlength\arraycolsep{2pt}\bea f_1(\tau)&=&\frac{8}{3}\tau^{-4}+o\!\left(\tau^{-4}\right),   \\ 
f_2(\tau)&=&\frac{2}{3}\tau^{-2}+o\!\left(\tau^{-2}\right),\eea}
\hspace{-1.4mm}for sufficient small $\tau$, and the CRLBs can be approximated as
{\setlength\arraycolsep{2pt}\bea \label{area1} C_{x,y}([0,\,0,\, z_0], R)&\approx&\frac{4 \lambda^2}{\pi^2 \tau^4},   \\
\label{area2} C_z([0,\,0,\, z_0], R)&\approx&\frac{\lambda^2}{\pi^2\tau^2}. \eea}
\hspace{-2.1mm}This shows that along the CPL, the CRLBs for $x$ and $y$ dimensions decrease quadratically with the surface-area ($\tau^2$ is proportional to area), while the CRLB for the $z$-dimension decreases linearly in that. This is so, since the CRLB for the $z$-dimension is much lower than that for the other two dimensions, but as $\tau$ increases, the limits of the CRLB for all three dimensions are identical as in (\ref{limtCRLB}). Therefore, the CRLBs for $x$ and $y$ dimensions must decrease faster than that for the $z$-dimension as surface-area increases.

\begin{figure*}[b]
\begin{center}
\vspace*{-76mm}
\hspace*{-2mm}
\scalebox{0.55}{\includegraphics{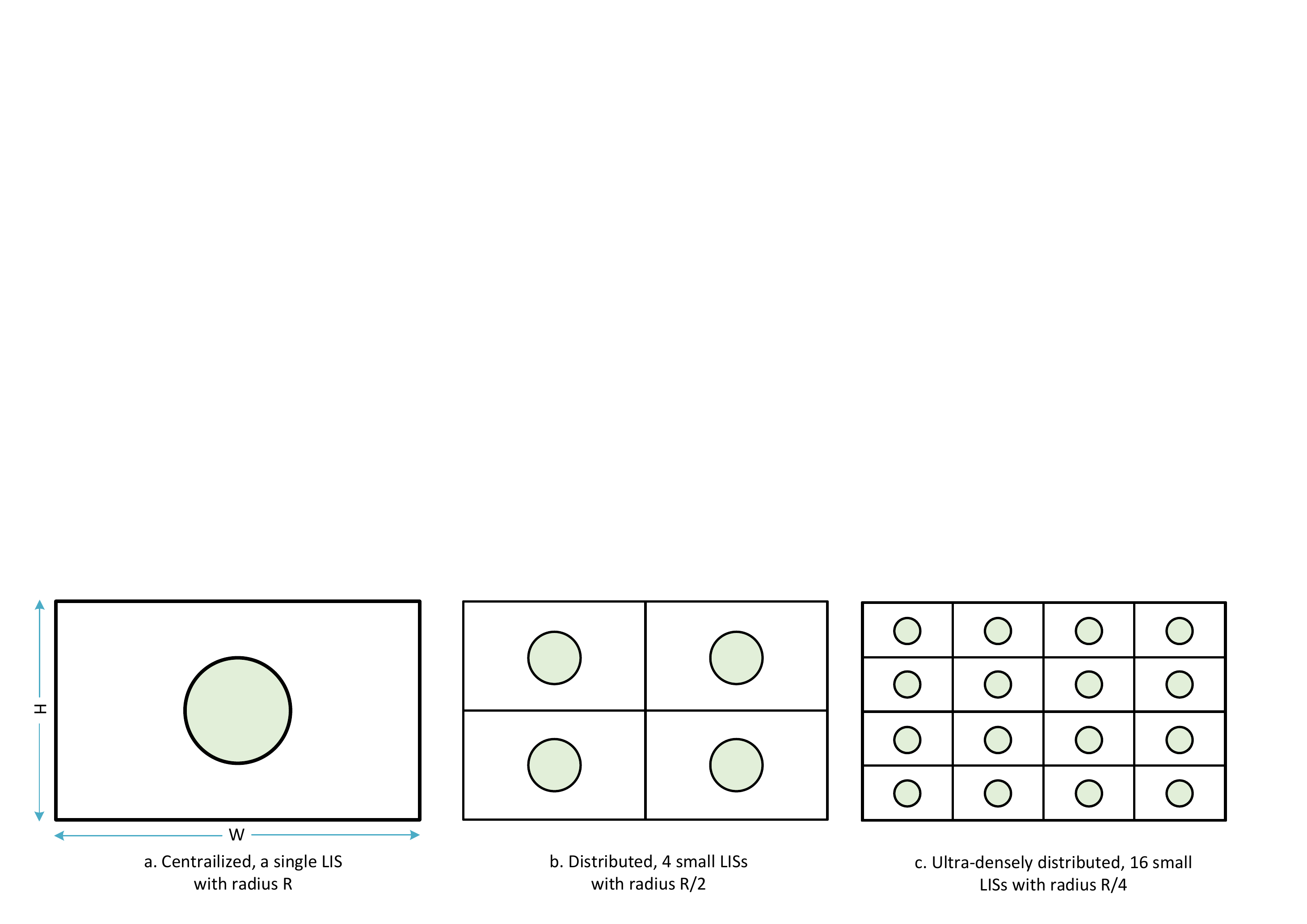}}
\vspace*{-5mm}
\caption{\label{fig4}Different deployments of the LIS on a surface with width $W$ and height $H$. The total surface-area is the same for different deployments. Note that, splitting a LIS into small LISs has a cost of communication channels among different small LISs for cooperation.
}
\vspace*{-4mm}
\end{center}
\end{figure*}

\section{CRLBs of Terminals not on the CPL}
Instead of considering terminals along the CPL, in this section we consider terminals with arbitrary coordinates $(x_0, y_0, z_0)$. When $x_0, y_0\!\neq\!0$, closed-form expressions of the CRLB seem out of reach due to the complicated integrations in (\ref{Fisherij}). Therefore, we seek approximations, tight enough so that insights can still be drawn, of the CRLBs in this case. Using the closed-form expressions of Fisher-information for terminals along the CPL in Sec. III, the CRLBs for the general case can be well-approximated as elaborated in detail next.

\subsection{CRLB Approximations for Terminals with Arbitrary Coordinates $(x_0, y_0, z_0)$}
We first introduce two mild conditions,
{\setlength\arraycolsep{2pt}  \bea \label{cond1}  \lambda&\ll& \frac{z_0^2}{\sqrt{z_0^2+x_0^2+y_0^2+R^2}}, \\
  \label{cond2} 2R&\ll&\frac{z_0^2}{\sqrt{x_0^2+y_0^2}}+\sqrt{x_0^2+y_0^2}.\eea}
\hspace{-1.4mm}As for the cases of interest $R$ is relatively small compared to $z_0$, and $\lambda$ is much smaller than $z_0$, these two conditions are usually satisfied. Letting
\bea z_1=\sqrt{x_0^2 + y_0^2 + z_0^2},\eea
our approximations for Fisher-information and CRLBs are stated in Proposition 1.

\begin{proposition}
Under the conditions (\ref{cond1})-(\ref{cond2}), the Fisher-information matrix for terminals with coordinates $(x_0, y_0, z_0)$ can be approximated as
{\setlength\arraycolsep{5pt}  \bea  \label{Imat} \vec{I}\approx\left[\!\begin{array}{ccc} \alpha+\frac{\beta\, x_0^2}{z_0^2}  & \frac{\beta\, x_0\, y_0}{z_0^2} & \frac{\beta\, x_0}{z_0}\\ \frac{\beta\, x_0\, y_0}{z_0^2} & \alpha+\frac{\beta\, y_0^2}{z_0^2}  & \frac{\beta\, y_0}{z_0}\\ \frac{\beta\, x_0}{z_0} & \frac{\beta\, y_0}{z_0} &\beta \end{array}\right]\!, \eea}
\hspace{-1.4mm}where $\alpha$ and $\beta$ equal
{\setlength\arraycolsep{2pt} \bea \label{approx1} \alpha&\approx& \frac{z_0}{z_1}I_{x,y}([0, 0, z_1], R),\\
 \label{approx2} \beta&\approx& \left(\frac{z_0}{z_1}\right)^3I_z([0, 0, z_1], R), \eea}
\hspace{-1.4mm}and $I_{x,y}([0, 0, z_1], R)$ and $I_z([0, 0, z_1], R)$ are the Fisher-informations for $x, y$ and $z$ dimensions for terminals with coordinates $(0, 0, z_1)$ as given in Theorem 1. Then, the CRLB matrix equals
{\setlength\arraycolsep{5pt}  \bea  \label{Cmat} \vec{C}=\vec{I}^{-1}\approx\left[\!\begin{array}{ccc} \frac{1}{\alpha} & 0 & -\frac{x_0}{\alpha\, z_0}\\ 0 & \frac{1}{\alpha} & -\frac{y_{0}}{\alpha\, z_0}\\ -\frac{x_{0}}{a\, z_0} & -\frac{y_{0}}{\alpha\, z_0} &\frac{1}{ \beta} + \frac{ x_0^2+y_0^2}{\alpha z_0^2} \end{array}
\right]\!. \eea}
\end{proposition}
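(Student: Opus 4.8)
The plan is to evaluate the Fisher-information integral (\ref{Fisherij}) directly, simplifying the integrand with the two stated conditions until the surviving integrals coincide with the closed-form CPL quantities of Theorem 1. First I would note that the three derivatives $\Delta s_1,\Delta s_2,\Delta s_3$ share the common phase $\exp(-2\pi j\sqrt{\eta}/\lambda)$, so in every product $\Delta s_k(\Delta s_i)^{\ast}$ the phase cancels and the integrand is real before $\Re\{\cdot\}$ is taken. Within each derivative, condition (\ref{cond1}) lets me discard the $\frac{3}{2}\eta^{-7/4}$ and $\frac{1}{2z_0^2}\eta^{-3/4}$ terms against the dominant $\frac{2\pi j}{\lambda}\eta^{-5/4}$ term (this is precisely where the wavelength condition is used). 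After this reduction $\Delta s_1\propto\sqrt{z_0}\,(x-x_0)\eta^{-5/4}$, $\Delta s_2\propto\sqrt{z_0}\,(y-y_0)\eta^{-5/4}$ and $\Delta s_3\propto z_0^{3/2}\eta^{-5/4}$, all with the same constant $\frac{j\sqrt\pi}{\lambda}$, so every entry of $\vec I$ reduces to $\frac{\pi}{\lambda^2}$ times a power of $z_0$ times a disk integral of a monomial in $(x-x_0)$ and $(y-y_0)$ against $\eta^{-5/2}$.

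The crucial step is to rewrite $\eta=z_1^2+r^2-2(xx_0+yy_0)$ with $r^2=x^2+y^2$ (the law of cosines about the LIS centre) and to replace $\eta$ by $z_1^2+r^2$, i.e. to drop the cross term $2(xx_0+yy_0)$. Because the disk is centred at the origin, every monomial containing an odd power of $x$ or $y$ integrates to zero; expanding $(x-x_0)^2,(y-y_0)^2,(x-x_0)(y-y_0),(x-x_0),(y-y_0),1$ and keeping only the even parts leaves exactly the two integrals $\iint r^2(z_1^2+r^2)^{-5/2}$ and $\iint(z_1^2+r^2)^{-5/2}$. Since the derivative prefactors carry the true amplitude $\sqrt{z_0}$ (resp. $z_0^{3/2}$) while the distance integral is that of the CPL at $z_1$, the first integral with prefactor $\frac{\pi z_0}{2\lambda^2}$ equals $\frac{z_0}{z_1}I_{x,y}([0,0,z_1],R)=\alpha$ and the second with prefactor $\frac{\pi z_0^3}{\lambda^2}$ equals $(z_0/z_1)^3 I_z([0,0,z_1],R)=\beta$, as in (\ref{approx1})-(\ref{approx2}). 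Carrying these through the expanded monomials reproduces (\ref{Imat}) term by term: the $x^2,y^2$ parts give the two copies of $\alpha$, while the squared offsets $x_0^2,y_0^2$, the product $x_0y_0$, and the single offsets $x_0,y_0$ supply the $\beta$-terms. The CRLB (\ref{Cmat}) then follows by inverting (\ref{Imat}); this inversion is exact for the stated $\vec I$, which one checks by writing $\vec I=\alpha(\vec e_1\vec e_1^{\rm T}+\vec e_2\vec e_2^{\rm T})+\beta\vec p\vec p^{\rm T}$ with $\vec p=(x_0/z_0,\,y_0/z_0,\,1)^{\rm T}$ and verifying that $\vec I\vec C$ is the identity, so no further approximation enters the CRLB step.

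The hard part will be justifying that dropping the cross term $2(xx_0+yy_0)$ is harmless, since the statement is an approximate identity with no explicit error term. A first-order expansion $\eta^{-5/2}\approx(z_1^2+r^2)^{-5/2}\big(1+5\tfrac{xx_0+yy_0}{z_1^2+r^2}\big)$ shows the leading correction is odd, so it vanishes against the even monomials; the genuine error enters only through the interaction of this odd factor with the odd part of each monomial, and for $I_{11}$ this surviving piece is $\propto x_0^2\iint r^2(z_1^2+r^2)^{-7/2}$. The subtlety is that it admits two complementary bounds: it is at most of relative order $x_0^2/z_1^2$ against $\alpha$, and at most of relative order $R^2/z_1^2$ against the companion entry $\beta x_0^2/z_0^2$. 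Neither factor is separately guaranteed small, but since the retained diagonal entry is $\alpha+\beta x_0^2/z_0^2\ge\max(\alpha,\beta x_0^2/z_0^2)$, the relative error is bounded by $\min(x_0^2/z_1^2,\,R^2/z_1^2)\le R\sqrt{x_0^2+y_0^2}/z_1^2$, which is exactly the small parameter controlled by condition (\ref{cond2}). The same two-sided estimate disposes of $I_{22}$, $I_{12}$ and the cross entries, so (\ref{Imat}) holds uniformly under (\ref{cond1})-(\ref{cond2}).
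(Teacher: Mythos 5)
Your derivation is correct and follows essentially the route the paper indicates but omits for space: use (\ref{cond1}) to discard the amplitude-derivative terms in $\Delta s_1,\Delta s_2,\Delta s_3$, use (\ref{cond2}) to replace $\eta=z_1^2+x^2+y^2-2(xx_0+yy_0)$ by $z_1^2+x^2+y^2$ so that the surviving disk integrals are exactly the CPL quantities of Theorem 1 evaluated at distance $z_1$ (yielding (\ref{approx1})--(\ref{approx2})), and then invert the rank-structured matrix exactly; your two-sided error estimate showing the relative error is controlled by $R\sqrt{x_0^2+y_0^2}/z_1^2$ is a careful addition that the paper does not supply. One small slip to fix: the dominant term of $\Delta s_3$ carries the factor $-\frac{2\pi j}{\lambda}$ rather than $+\frac{2\pi j}{\lambda}$, and this sign must be retained so that it combines with the $-x_0$ arising from $\iint (x-x_0)(z_1^2+x^2+y^2)^{-5/2}\,\mathrm{d}x\,\mathrm{d}y$ to give $I_{13}=+\beta x_0/z_0$ as in (\ref{Imat}).
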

\begin{proof} 
The derivations underlying the approximations are based on (\ref{cond1})-(\ref{cond2}) and utilizing Theorem 1 to approximate the integrations in (\ref{Fisherij}), but are omitted here due to page-limit.
\end{proof}

From Proposition 1, the Fisher-informations and CRLBs are approximated in closed-form. Specially, when $x_0\!=\!y_0\!=\!0$, that is, terminals are along the CPL line,  the approximations in (\ref{Imat})-(\ref{Cmat}) are equalities. Further, we have the below corollary.
\begin{corollary}
Under the conditions (\ref{cond1})-(\ref{cond2}), the CRLBs for the $x$ and $y$ dimensions are approximately equal, and depend on $(x_0, y_0, z_0)$ through $z_0$ and the radius $\sqrt{x_0^2+y_0^2}$.
\end{corollary}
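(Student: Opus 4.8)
The plan is to read the claim directly off the approximate CRLB matrix supplied by Proposition 1 in (\ref{Cmat}), and then to trace the dependency of the relevant entries on the coordinates. First I would identify the $x$- and $y$-dimension CRLBs with the $(1,1)$ and $(2,2)$ entries of $\vec{C}$, both of which equal $1/\alpha$ in the approximation (\ref{Cmat}). Since these two entries are literally the same expression, the $x$- and $y$-dimension CRLBs are (approximately) equal, which settles the first half of the statement with essentially no computation.

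For the dependency claim, I would unfold the definition of $\alpha$ in (\ref{approx1}), namely $\alpha\approx(z_0/z_1)\,I_{x,y}([0,0,z_1],R)$ with $z_1=\sqrt{x_0^2+y_0^2+z_0^2}$. Writing $\rho=\sqrt{x_0^2+y_0^2}$ for the in-plane radius, the only way the coordinates $x_0,y_0$ enter is through $z_1=\sqrt{\rho^2+z_0^2}$, so both $z_1$ and the prefactor $z_0/z_1$ are functions of $(\rho,z_0)$ alone. It then remains to invoke Theorem 1, which shows that the CPL Fisher-information $I_{x,y}([0,0,z_1],R)$ depends on the terminal position only through the single distance argument $z_1$ (with $R$ and $\lambda$ held fixed). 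Chaining these observations, $\alpha$ is a function of $(\rho,z_0)$ only, and hence so is $C_x=C_y=1/\alpha$.

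There is no genuine obstacle here: the corollary is a direct reading of the structure of $\vec{C}$ in Proposition 1. The one point that warrants a line of care is confirming that the individual coordinates $x_0,y_0$ never appear outside the combination $\rho=\sqrt{x_0^2+y_0^2}$ — which is immediate once $\alpha$ is expressed through $z_1$ — so that rotating the terminal about the CPL at fixed radius leaves the $x$- and $y$-dimension CRLBs unchanged. This is exactly the behaviour one expects from the rotational symmetry of the disk-shaped LIS of radius $R$, and serves as a consistency check on the approximation rather than as a step requiring additional work.
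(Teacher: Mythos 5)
Your argument is correct and matches the paper's intent: the corollary is stated as an immediate consequence of Proposition 1, and reading off that $C_x=C_y=1/\alpha$ with $\alpha$ depending on $(x_0,y_0)$ only through $z_1=\sqrt{x_0^2+y_0^2+z_0^2}$ is exactly the reasoning the authors rely on (they give no separate proof). No gaps.
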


To simplify the analysis, we assume $R\!\ll\!z_0$. Then using (\ref{area1})-(\ref{area2}) and Proposition 1, we have the approximated CRLBs stated in Proposition 2.

\begin{proposition}
Under the case $R\!\ll \!z_0$ and with conditions (\ref{cond1})-(\ref{cond2}), the CRLBs for terminals with coordinates $(x_0, y_0, z_0)$ can be approximated as
{\setlength\arraycolsep{2pt}\bea  \label{app1} C_{x,y}&\approx&\frac{4\lambda^2z_1^5}{\pi^2z_0R^4},   \\
  \label{app2} C_{z}&\approx&\frac{\lambda^2z_0^2}{\pi^2R^2}+\frac{4\lambda^2(x_0^2+y_0^2)z_1^5}{\pi^2z_0^3R^4}. \eea}
\end{proposition}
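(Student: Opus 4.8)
The plan is to start from the closed-form CRLB matrix $\vec{C}$ in (\ref{Cmat}) of Proposition 1 and simply read off its diagonal entries, which are exactly the quantities of interest. The $x$- and $y$-CRLBs are the $(1,1)$ and $(2,2)$ entries, $C_{x,y}=1/\alpha$, while the $z$-CRLB is the $(3,3)$ entry, $C_z=1/\beta+(x_0^2+y_0^2)/(\alpha z_0^2)$. So the whole proposition reduces to evaluating $\alpha$ and $\beta$ in closed form under the extra assumption $R\ll z_0$, and the remaining work is purely substitution of the CPL results into (\ref{approx1})--(\ref{approx2}).

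For $C_{x,y}$ I would first note that $R\ll z_0\le z_1$, so at the virtual CPL point $(0,0,z_1)$ the ratio $\tau=R/z_1$ is small and the small-$\tau$ expansion (\ref{area1}) applies there. Inverting (\ref{area1}) gives $I_{x,y}([0,0,z_1],R)\approx \pi^2R^4/(4\lambda^2 z_1^4)$, and substituting this into (\ref{approx1}) yields $\alpha\approx \pi^2 z_0 R^4/(4\lambda^2 z_1^5)$, whence $C_{x,y}=1/\alpha$ reproduces (\ref{app1}) immediately. The same recipe handles $\beta$: inverting (\ref{area2}) at $(0,0,z_1)$ gives $I_z([0,0,z_1],R)\approx \pi^2 R^2/(\lambda^2 z_1^2)$, and (\ref{approx2}) then gives $\beta\approx \pi^2 z_0^3 R^2/(\lambda^2 z_1^5)$.

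Assembling the $(3,3)$ entry, the piece $(x_0^2+y_0^2)/(\alpha z_0^2)$, with the $\alpha$ just found, contributes the $\sim R^{-4}$ term $4\lambda^2(x_0^2+y_0^2)z_1^5/(\pi^2 z_0^3 R^4)$, which is exactly the second summand in (\ref{app2}), while the term $1/\beta$ contributes the slowly vanishing $\sim R^{-2}$ piece.

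The step I expect to require the most care is precisely this first summand of $C_z$. The direct substitution gives $1/\beta\approx \lambda^2 z_1^5/(\pi^2 z_0^3 R^2)$, whereas (\ref{app2}) reports the cleaner $\lambda^2 z_0^2/(\pi^2 R^2)$, and the two agree only on the CPL where $z_1=z_0$. The argument I would give is a case split governed by (\ref{cond2}): if the terminal is near the CPL then $x_0^2+y_0^2\ll z_0^2$ forces $z_1\approx z_0$ and the two forms coincide, whereas if $\sqrt{x_0^2+y_0^2}$ is comparable to or larger than $z_0$ then (\ref{cond2}) forces $R\ll\sqrt{x_0^2+y_0^2}$, so the $R^{-4}$ summand dominates the $R^{-2}$ one and the precise form of the latter is immaterial; either way the stated expression is the correct leading-order term. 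The only other point needing justification is that (\ref{cond1})--(\ref{cond2}) together with $R\ll z_0$ indeed guarantee both $\lambda\ll z_1$ and $R\ll z_1$, which is exactly what licenses invoking the CPL small-$\tau$ approximations (\ref{area1})--(\ref{area2}) at the shifted distance $z_1$ inside Proposition 1.
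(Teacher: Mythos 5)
Your proposal is correct and follows essentially the same route as the paper, which simply substitutes the small-$\tau$ CPL approximations (\ref{area1})--(\ref{area2}) evaluated at $(0,0,z_1)$ into (\ref{approx1})--(\ref{approx2}) and reads off the diagonal of (\ref{Cmat}). In fact you go a step further than the paper by explicitly reconciling $1/\beta\approx\lambda^2 z_1^5/(\pi^2 z_0^3 R^2)$ with the stated $\lambda^2 z_0^2/(\pi^2 R^2)$ via the case split on $\sqrt{x_0^2+y_0^2}$, a point the paper leaves implicit.
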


Compared to the case that terminals are along the CPL, with a relatively small $R$, the CRLB for the $z$-dimension is dramatically increased when terminals move away from the CPL, that is, $x_0^2\!+\!y_0^2\!\neq\!0$. Further, when $\sqrt{x_0^2\!+\!y_0^2}\!>\!z_0$, the CRLB for the $z$-dimension becomes even larger than that for the $x$ and $y$ dimensions. Most importantly, when terminals are away from the CPL, the CRLBs decrease quadratically in the surface-area of the LIS for all three dimensions. Therefore, the LIS can provide substantial gains over the massive MIMO for terminal positioning, as the number of antenna-elements deployed in an LIS is increased by a factor of 10$\sim$100 over a traditional massive MIMO deployment.

\subsection{CRLB for AoA and Radius}
Instead of estimating the coordinates $(x_0, y_0, z_0)$, in some cases it is of more interest to estimate the angles-of-arrival (AoA) $\phi, \psi$ and the corresponding radius $\kappa\!=\!\sqrt{x_0^2+y_0^2+z_0^2}$, in which case, we have spherical coordinates as
{\setlength\arraycolsep{2pt} \bea \label{spcart} 
x_0&=&\kappa\sin\phi\cos\psi,\notag \\
y_0&=&\kappa\sin\phi\sin\psi, \notag \\
z_0&=&\kappa\cos\phi. \eea}
\hspace{-1.1mm}Using variable substitution formula \cite{K93} for CRLB computation, the CRLB matrix for estimating $(\kappa,\phi, \psi)$ can be directly calculated based on Proposition 1 and 2.

\section{Deployments of the LIS}
In this section we consider different deployments of the LIS on a surface with size $W\!\times\! H$ where $W, H$ are the width and height, respectively. In particular, we consider the centralized-deployment (a) and the distributed-deployments (b) and (c) as depicted in Fig. \ref{fig4}. For simplicity, we assume $R,\,\lambda\!\ll\! z_0$ and consider the CRLB for a terminal on the CPL with coordinates $(0, 0, z_0)$, that is, positioning a terminal on the far-field.

With the centralized deployment (a), the CRLBs are given in (\ref{area1}) and (\ref{area2}). With a distributed deployment (b), the LIS is split into four small LISs centered at $(\pm W/4, \pm H/4)$ with equal radius $R/2$. Using Proposition 1 and 2, and the symmetry between the centers of the LISs and the terminal-positions, the sum of the Fisher-information matrices corresponding to the four small LISs is diagonal, and the Fisher-information for the $x, y$ and $z$ dimensions can be shown to equal
{\setlength\arraycolsep{2pt} \bea I_{x,y}&\approx&\frac{\pi^2z_0R^4}{16\lambda^2(z_0^2+D^2)^{5/2}}+\frac{\pi^2D^2z_0R^2}{2\lambda^2(z_0^2+D^2)^{5/2}},  \\
I_z&\approx&\frac{\pi^2R^2z_0^3}{\lambda^2(z_0^2+D^2)^{5/2}}.  \eea}
\hspace{-1.4mm}where $D\!=\!\frac{\sqrt{W^2+H^2}}{4}$. Assuming $D\!\ll\! z_0$, it holds that
 {\setlength\arraycolsep{2pt}    \bea  \label{Ixy4} I_{x,y}&\approx&\frac{\pi^2R^4}{4\lambda^2z_0^4}\left(\frac{1}{4}+\frac{2D^2}{R^2}\right),  \\
 \label{Iz4} I_{z}&\approx&\frac{\pi^2R^2}{\lambda^2z_0^2}. \eea}
\hspace{-1.4mm}As can be seen, compared to the centralized case, the CRLBs for all dimensions with the distributed deployment decrease linearly in the total surface-area for a relatively small $R$. 

Further, comparing (\ref{area1}) to (\ref{Ixy4}), we obtain the insight that the CRLBs for $x$ and $y$ dimensions with the distributed deployment (b) is lower than that with the centralized deployment (a) only if $D\!>\!\sqrt{\frac{3}{8}}R$, or equivalently, 
 \bea \label{thresh2} R< \sqrt{\frac{W^2+H^2}{6}}. \eea

That is to say, in the far-field with the distributed deployment (b), the CRLB for the $x$ and $y$ dimensions is improved for a terminal with a distance to the CPL larger than $\sqrt{6}R$. Otherwise, the centralized deployment (a) provides lower CRLB for $x, y$ dimensions than that for the distributed deployment. However, the CRLB for the $z$-dimension remains the same. Following the same principle, under the condition (\ref{thresh2}), one can continue to split the LISs into more small pieces and obtain an ultra-densely distributed deployment such as in (c). In general, the positioning performance is further improved with deployment (c) as the projection to the LIS-plane of each terminal-position is centered by a number of small LISs.

\section{Numerical  Results}
\subsection{Exact-CRLB Evaluations}
We first evaluate the CRLBs for terminals along and away from the CPL. As only the radius $\sqrt{x_0^2\!+\!y_0^2}$ matters as shown in Corollary 1, we test with offsets only in the $x$-dimension. In Fig. \ref{fig5} and \ref{fig7}, we test with $R\!=\!1$, $\lambda\!=\!0.1$, $y_0\!=\!0$, $x_0\!=\!2$, 4, 8, and $z_0\!=\!4$, 6, respectively. Some interesting results can be observed. Firstly, as shown in Fig. \ref{fig5}, when $\tau$ is small the CRLB for the $x$ and $y$ dimensions decrease quadratically in surface-area, while as shown in Fig. \ref{fig7}, the CRLB for the $z$ -dimension decreases linearly in that. This is well aligned with (\ref{area1}) and (\ref{area2}). Secondly, the CRLB for the $z$-dimension increases dramatically when the terminal moves away from the CPL. Further, as long as $x_0 \!\neq \!0$, the CRLB for the $z$-dimension also decreases quadratically in the surface-area. These phenomenons are well predicted by Proposition 1 and 2. Lastly, it can been seen that, as $R\to\infty$, the CRLBs converge to the limit  $\frac{3\lambda^2}{2\pi^2}\!=\! 1.5\!\times\!10^{-3}$ for all dimensions as shown in (\ref{limtCRLB}).

\subsection{CRLB Approximation Accuracies}

Next we evaluate the CRLB approximations accuracies for terminals at non-CPL positions. We compare the numerical integration results of the CRLB (with absolute error $10^{-10}$ and relatively error $10^{-6}$ using the Matlab built-in function \lq{}integral\rq{}) and the approximations using (\ref{app1})-(\ref{app2}) in Proposition 2. We test with $R\!=\!0.5$, $\lambda\!=\!0.1$, and $z_0\!=\!8$, and set $x_0\!=\!y_0$ in the range from 1 to 8. The CRLBs and the normalized approximation errors are shown in Fig. \ref{fig9}, where the normalized errors are computed as the normalized CRLB differences between the numerical integrations and the approximations. As can be seen, the approximations of CRLB given by Proposition 2 perform well, with errors less than 0.5\% for the $x, y$ dimensions and close to 1\% for the $z$-dimension. The errors for the $z$-dimension are slightly higher than those for the $x, y$ dimensions is because the approximations depend both on estimations of $\alpha$ and $\beta$, rather than only $\alpha$ as the latter case, as shown Proposition 1.

\subsection{CRLB with Different Deployments}
At last, we evaluate the CRLB with centralized and distributed deployments discussed in Sec. V. We set $W\!=\!H\!=\!4$ and $z_0\!=\!8$. All curves are obtained with numerical integrations and we compare the CRLBs for the deployments depicted in Fig. 2, that is, a single LIS, 4 small LISs, and 16 smaller LISs, with the same total surface-area. As shown in Fig. 7, when (\ref{thresh2}) is fulfilled, i.e., $R\!\leq\!\sqrt{\frac{W^2+H^2}{6}}\!=\!2.31$, the distributed deployments with 4 and 16 small LISs render lower CRLBs than the centralized LIS for the $x$ and $y$ dimensions, while the CRLBs for the $z$-dimension are almost the same. When $R$ increases beyond the limit, the distributed deployments become worse for the $x$ and $y$ dimensions, although the $z$-dimension is slightly better.  As $R$ increases, different deployments converge to each other as expected. In addition, further splitting the 4 small LISs into 16 smaller LISs only provides marginal gains at a cost of more communication channels are needed for different small LISs to cooperating with each other.

\begin{figure}[t]
\begin{center}
\vspace*{-4mm}
\hspace*{-5mm}
\scalebox{0.335}{\includegraphics{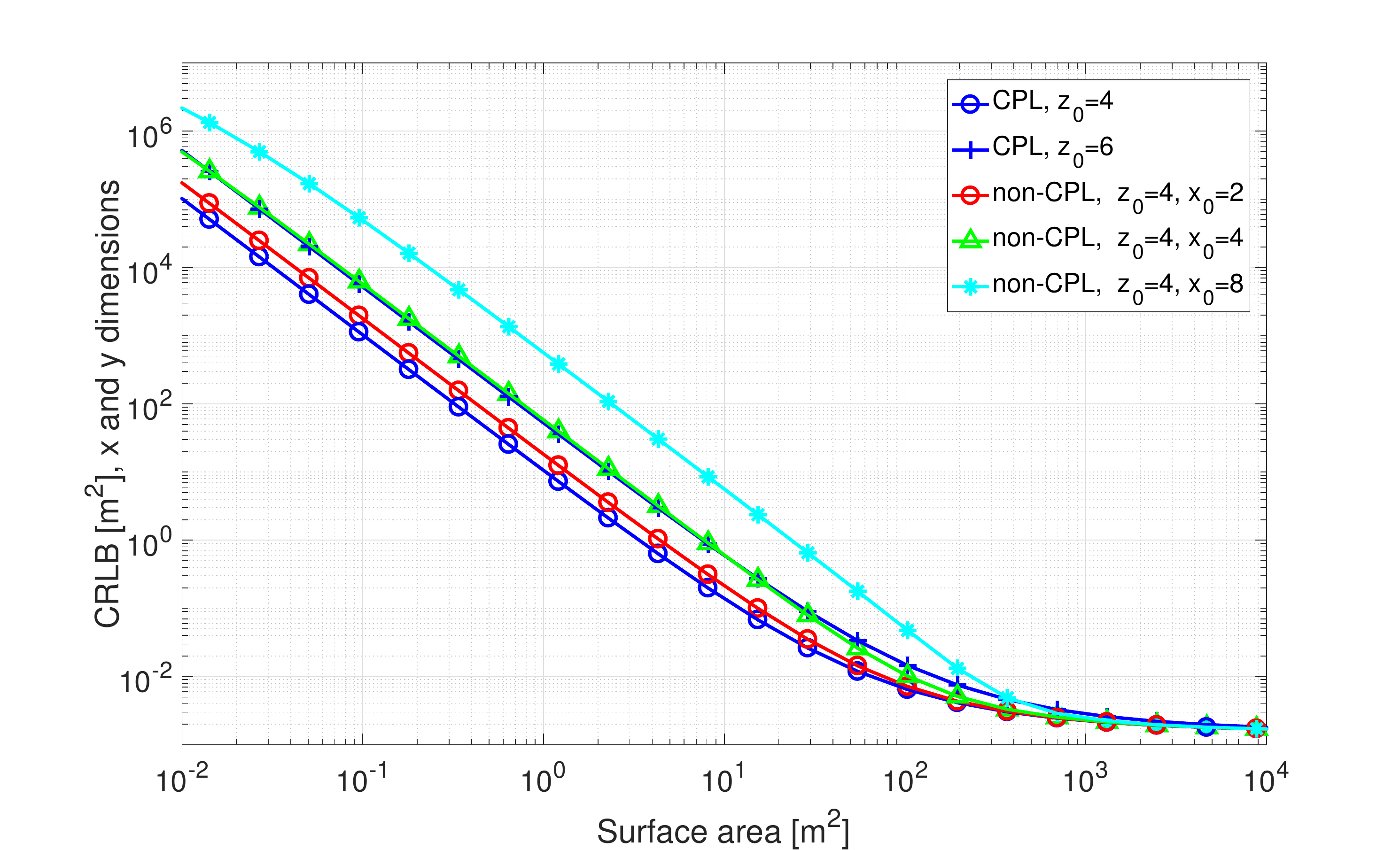}}
\vspace*{-7mm}
\caption{\label{fig5}CRLB for $x$ and $y$ dimensions, and the CRLBs for $y$-dimension are almost overlapped with those for $x$-dimension.}
\vspace*{-5mm}
\end{center}
\end{figure}

\begin{figure}
\begin{center}
\vspace*{-2mm}
\hspace*{-5mm}
\scalebox{0.335}{\includegraphics{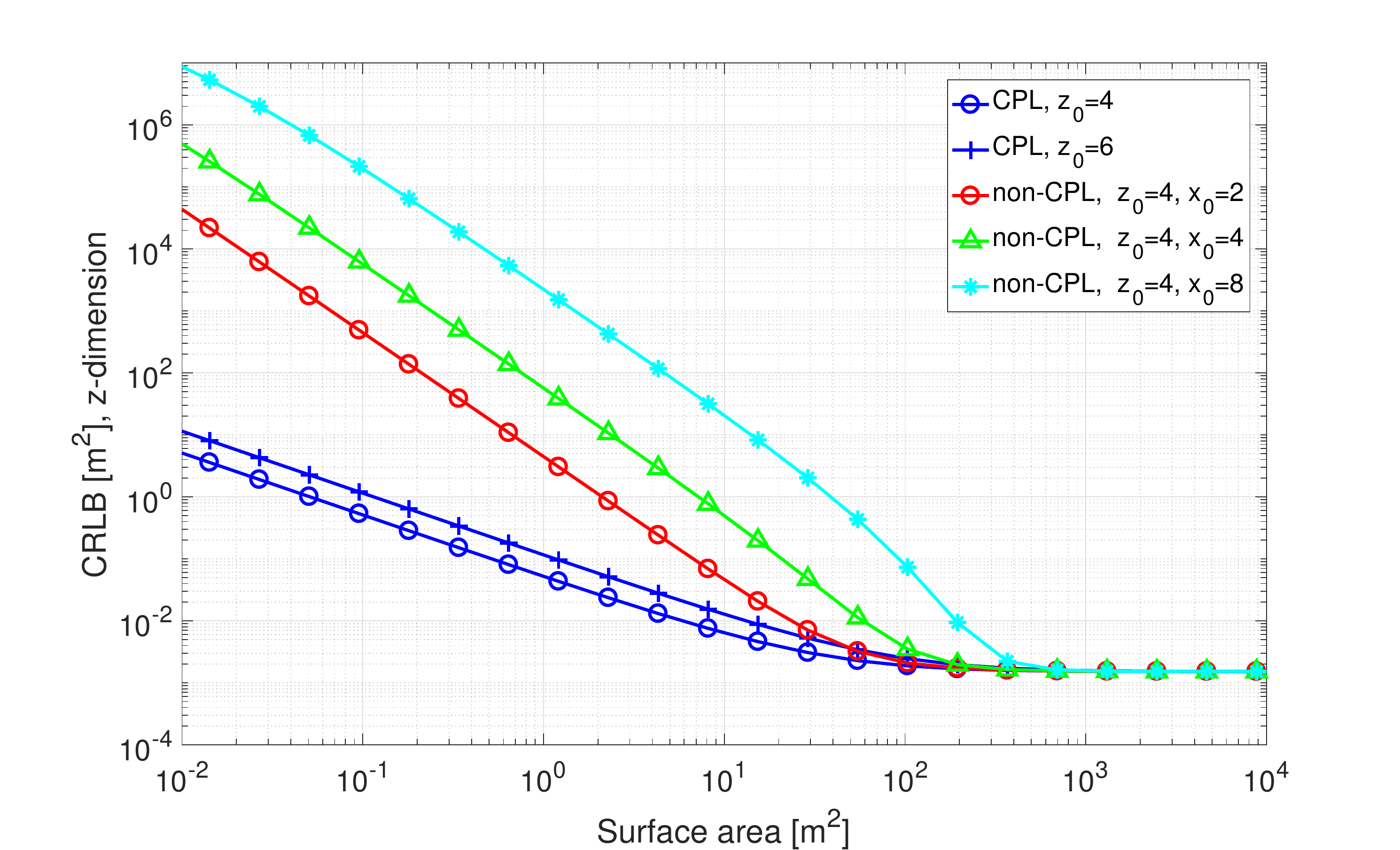}}
\vspace*{-7mm}
\caption{\label{fig7}CRLBs for the $z$-dimension for the cases evaluated in Fig. \ref{fig5}.}
\vspace*{-6mm}
\end{center}
\end{figure}

\begin{figure}[t]
\begin{center}
\vspace*{-3mm}
\hspace*{-7mm}
\scalebox{0.335}{\includegraphics{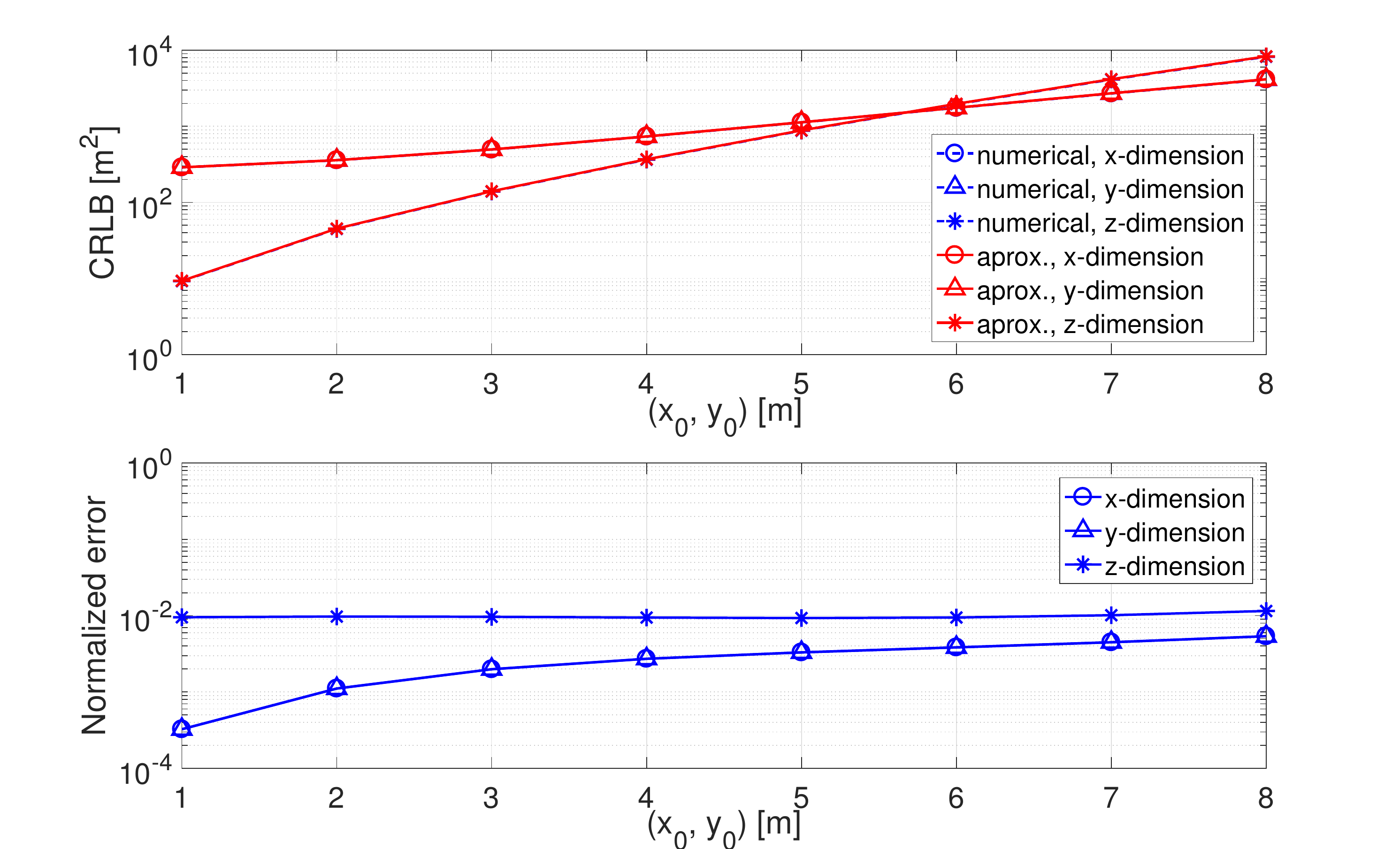}}
\vspace*{-7mm}
\caption{\label{fig9}CRLBs evaluated with numerical integrations and the approximations using Proposition 1, and the normalized approximation errors for different dimensions. The curves are almost on top of each other for the CRLB, while the normalized approximation errors for the $x$ and $y$ dimensions are the same.}
\vspace*{-4mm}
\end{center}
\end{figure}

\begin{figure}
\begin{center}
\vspace*{-3mm}
\hspace*{-7mm}
\scalebox{0.335}{\includegraphics{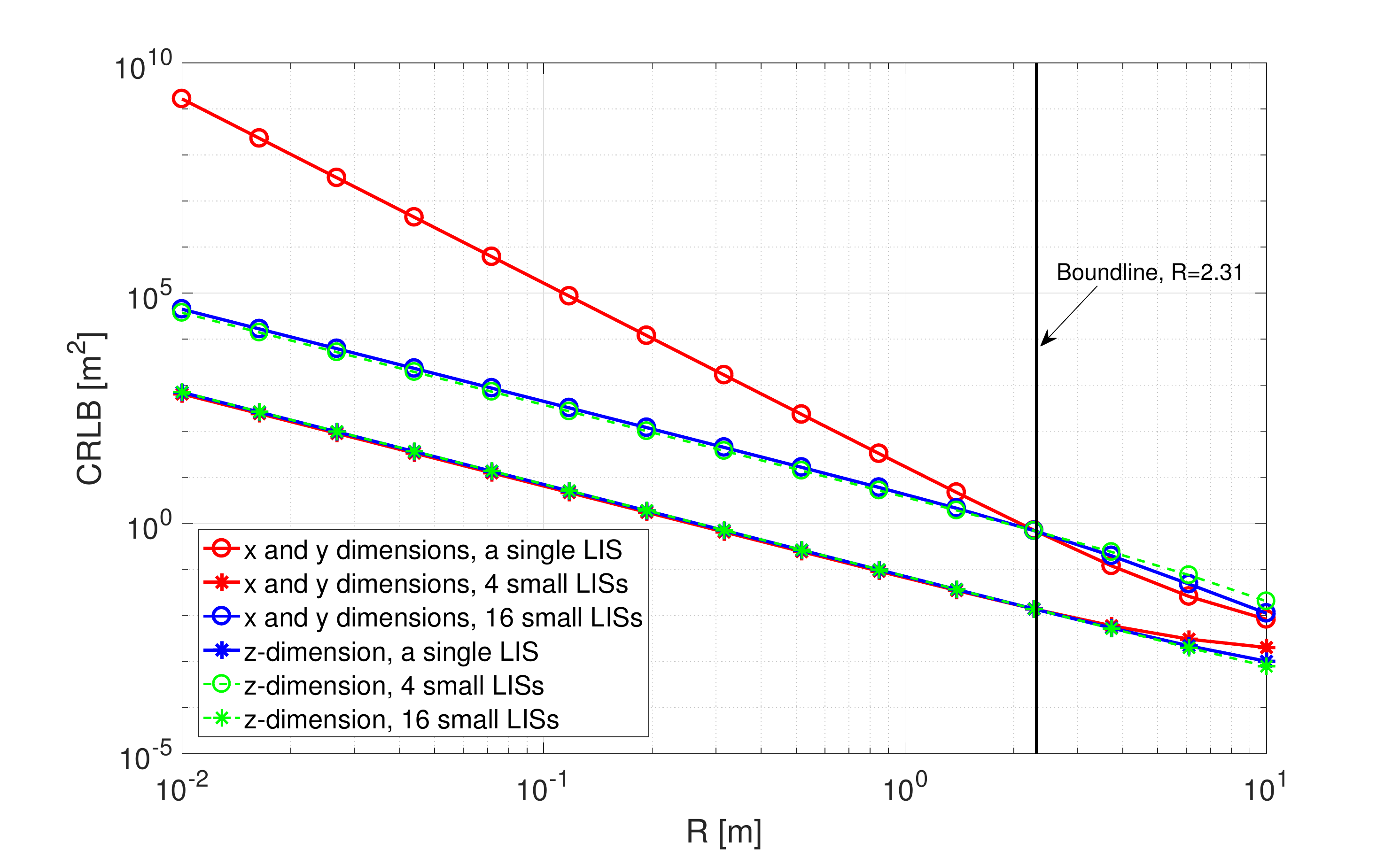}}
\vspace*{-7mm}
\caption{\label{fig10}The CRLBs evaluated with different deployments of the LIS.}
\vspace*{-6mm}
\end{center}
\end{figure}

\section{Summary}
In this paper, we have derived the Fisher-information and Cram\'{e}r-Rao lower bound (CRLB) for terminal positioning with large intelligent surfaces (LIS). For terminals along the central perpendicular line (CPL), the CRLBs are derived in closed-form for all Cartesian dimensions. For other positions we have alternatively provided approximations in closed-forms to compute the Fisher-informations and CRLBs which are shown to be accurate. We have shown that, under mild conditions the CRLBs in general decrease quadratically in the surface-area of the deployed LIS for all dimensions. Moreover, we compare centralized and distributed deployments of the LIS, and show that the distributed deployments have the potential to lower the CRLBs as long as the surface-area is less than a certain limit.

\section*{Appendix A}
Firstly, we define two functions $g_1(n)$ and $g_2(n)$ as
{\setlength\arraycolsep{0pt}\bea  \label{g1}  g_1(n)&=&\iint_{x^2+y^2\leq R^2}x^2\eta^{-\frac{n}{2}}\mathrm{d}x\mathrm{d}y 
=\iint_{x^2+y^2\leq R^2}y^2\eta^{-\frac{n}{2}}\mathrm{d}x\mathrm{d}y,  \notag \\
\label{g2} g_2(n)&=&\iint_{x^2+y^2\leq R^2}\eta^{-\frac{n}{2}}\mathrm{d}x\mathrm{d}y. \notag \eea}
\hspace{-1.4mm}In general, closed-form expressions of $g_1(n)$ and $g_2(n)$ are out of reach, except for the case that $x_0\!=\!y_0\!=\!0$, i.e., the terminal is on the CPL, in which case, $g_1(n)$ and $g_2(n)$ are in closed-form and it holds that
{\setlength\arraycolsep{0pt} \bea \label{g1n} g_1(n)&=&\frac{\pi\!\left(2z_0^{4 -n}\! -\! {\left(R^2 + z_0^2\right)}^{1 - \frac{n}{2}}\! \left(n R^2 \!-\!2 R^2 + 2z_0^2\right)\!\right)}{n^2 - 6n + 8},  \qquad \\
\label{g2n} g_2(n)&=&\frac{z_0^{2 -n}\! -\! {\left(R^2 \!+\! z_0^2\right)}^{1 \!-\! \frac{n}{2}}}{n - 2}.  \eea}
\hspace{-1.4mm}For a terminal on the CPL, as $x_0\!=\!y_0\!=\!0$, the first-order derivatives with respect to $x$ and $y$ are equals to
{\setlength\arraycolsep{0pt}  \bea \label{dev11} \Delta s_1&=&\frac{\sqrt{z_0}x}{2\sqrt{\pi}}\!\left(\!
\frac{3}{2}\eta^{-\frac{7}{4}} \!+\! \frac{2\pi j }{\lambda}{\eta}^{-\frac{5}{4}}\!\right)\!\exp\left(\!-\frac{2\pi j\sqrt{\eta}}{\lambda} \right)\!,   \\
\label{dev22} \Delta s_2&=&\frac{\sqrt{z_0}y}{2\sqrt{\pi}}\left(
\!\frac{3}{2}\eta^{-\frac{7}{4}} \!+\! \frac{2\pi j }{\lambda}{\eta}^{-\frac{5}{4}}\!\right)\!\exp\left(\!-\frac{2\pi j\sqrt{\eta}}{\lambda}\right)\!,   \eea}
\hspace{-1.4mm}and the first-order derivative with respect to $z$ is in (\ref{dev3}) where in this case the metric in (\ref{eta}) becomes 
\bea \eta=z_0^2+y^2+x^2.  \eea
Since $\eta$ is an even function with respect to $x$ and $y$, the cross-terms of different dimensions in the Fisher-information matrix are then zeros, which is diagonal with diagonal elements being
\bea \label{appA1} I_{ii}=\iint_{x^2+y^2\leq R^2} |\Delta s_i|^2\mathrm{d}x\mathrm{d}y. \eea
Calculating (\ref{appA1}) directly yields
{\setlength\arraycolsep{0pt}\bea  \label{appA2} I_{11}&=&I_{22}=\frac{z_0}{4\pi}\left(\frac{9}{4}g_1(7)+ \frac{4\pi^2}{\lambda^2}g_1(5)\right)\!,  \\
\label{appA3} I_{33}&=&\frac{z_0^3}{4\pi}\bigg(\frac{1}{4z_0^4}g_2(3) +\left(\frac{4\pi^2}{\lambda^2}- \frac{3}{2z_0^2}\right)g_2(5)+ \frac{9}{4}g_2(7)\!\bigg), \qquad \eea}
\hspace{-1.4mm}Utilizing the results in (\ref{g1n}) and (\ref{g2n}) and after some multiplications, the Fisher-information for different dimensions are then in (\ref{Ixy}) and (\ref{Iz}).

\bibliographystyle{IEEEtran}

\end{document}